    \numberwithin{equation}{section}
\newcommand\rmi{\mathrm{i}}
\newcommand\NW{\mathrm{NW}}
\newcommand\NE{\mathrm{NE}}
\newcommand\SE{\mathrm{SE}}
\newcommand\SW{\mathrm{SW}}
\renewcommand\geq{\geqslant}
\renewcommand\phi\varphi
\newcommand\arctic{\mathcal{A}}
\newcommand\FO{\mathsf{F}}
\newcommand\DI{\mathsf{D}}
\newtheorem{proposition}{Proposition}
\newtheorem{corollary}{Corollary}
\newcommand\preprint[1]{\begin{flushright} \large\texttt{#1} \end{flushright}\bigskip}
\begin{document}
\preprint{DFF 452/02/2010}

\title[Algebraic arctic curves]{Algebraic arctic curves in the domain-wall six-vertex model}

\author{F. Colomo}
\address{INFN, Sezione di Firenze\\
Via G. Sansone 1, 50019 Sesto Fiorentino (FI), Italy}
\email{colomo@fi.infn.it}

\author{V. Noferini}
\address{Department of Mathematics, University of Pisa\\
Largo Bruno Pontecorvo 5, 56127 Pisa, Italy}
\email{noferini@mail.dm.unipi.it}

\author{A. G. Pronko}
\address{Department of Physics, University of Wuppertal,
42097 Wuppertal, Germany
(\textit{On leave of absence from}:
Saint Petersburg Department of V.~A.~Steklov Mathematical Institute,
Russian Academy of Sciences,
Fontanka 27, 191023 Saint Petersburg, Russia)}
\email{agp@pdmi.ras.ru}

\begin{abstract}

The arctic curve, i.e. the spatial curve separating ordered (or `frozen')
and disordered (or `temperate) regions, of the six-vertex model with
domain wall boundary conditions is discussed for
the root-of-unity vertex weights. In these cases the curve is described by
algebraic equations which can be worked out explicitly from the parametric
solution for this curve. Some interesting examples are discussed in
detail. The upper bound on the maximal degree of the equation in a
generic root-of-unity case is obtained.
\end{abstract}
\maketitle

\section{Introduction}

It is commonly known that in strongly correlated systems boundary conditions may
lead to emergence of spatial separation of phases, e.g., order and disorder.
Famous examples of such systems are domino tiling of large Aztec diamonds
\cite{CEP-96,JPS-98}
and lozenge tilings of large hexagon \cite{CLP-98,BGR-10}.
Another example, more elaborated and with important
combinatorial applications as particular cases, is represented by the six-vertex model
with domain wall boundary conditions
\cite{K-82,I-87,ICK-92,E-99,KZj-00,Zj-00,Zj-02,SZ-04,AR-05,R-10}.

Recently, for this model a progress was achieved in finding an explicit
form of the curve separating the disordered and ferroelectrically ordered
phases (called Arctic curve, by analogy with Arctic circle of domino
tilings). In \cite{CP-08} a conjectural expression for the Arctic curve was
derived for various  particular cases of Boltzmann weights, with one of
them providing the limit shape of large alternating-sign matrices. In
\cite{CP-09} this result for the Arctic curve was extended to the case of generic
Boltzmann weights of the six-vertex model corresponding to its disordered
regime. The case of anti-ferroelectric regime was considered in \cite{CPZj-10}.

In this paper we further discuss the Arctic curve of
the domain-wall six-vertex model, focussing on the special cases of Boltzmann
weights in the disordered regime, where the Arctic curve, generally
transcendental, becomes algebraic. The six-vertex model in these cases
is sometimes referred to as the six-vertex model at `roots of unity'
(the term comes mostly from the quantum-group context). Some of these cases
were already considered in \cite{CP-09}. Here we explain in detail
how the algebraic equation can be derived from
the general parametric solution for the curve, when the Boltzmann
weights corresponds to the `root-of-unity' cases. In the specific case of dimer models
on planar bipartite graphs, the relation between
Arctic curves and algebraic curves was investigated in
\cite{KOS-06,KO-06,KO-07}.

The outline of the paper is as follows. In the next section we recall the result
for the Arctic curve of the domain-wall six-vertex model in its disordered
regime. The specificities of the `root-of-unity' cases are explained in section 3.
Some interesting particular cases are exposed in section 4. We end
up by discussing the upper limit bound for the degree of the algebraic
equation for the Arctic curve for generic `root-of-unity' case in section 5.
Finally, some technical aspects are discussed in the appendix.

\section{Arctic curve of the disordered regime}

We start with recalling the result of paper \cite{CP-09} on the arctic curve of the
six-vertex model with domain wall boundary conditions in its disordered regime. We
first recall how the Arctic curve arises in the model and next explain the
result.

We consider the six-vertex model on a square lattice formed by intersection of
equal number of vertical and horizontal lines, with all states on the
boundaries fixed in the special way, called the domain wall boundary
conditions \cite{K-82}. Using the arrow language of states on edges, these boundary
conditions mean that all arrows on top and bottom boundaries are incoming,
while on  the left and right boundaries they are outgoing.

The Boltzmann weights, usually denoted as $a$, $b$ and $c$, are parameterized
as
\begin{equation}\label{abc}
a=\sin(\lambda+\eta),\qquad b=\sin(\lambda-\eta),\qquad c=\sin2\eta,
\end{equation}
where
$\lambda$ and $\eta$ are the `rapidity' variable and the `crossing'
parameter, respectively. The disordered regime
corresponds to $\lambda$
and $\eta$ real and satisfying
\begin{equation}
\eta<\lambda<\pi-\eta,\qquad
0<\eta<\pi/2.
\end{equation}
It is useful to recall that physical regimes of the six-vertex model
are characterized by the parameter $\Delta=(a^2+b^2-c^2)/2ab$,
and the disordered regime corresponds to $|\Delta|<1$; in our parametrization we
have $\Delta=\cos2\eta$.

The Arctic curve describes spatial separation of phases, of ferroelectric order
and disorder. The effect of separation of phases is
related to the fact that in the domain-wall six-vertex model
ordered configurations on the boundary can induce,
through the ice-rule, a macroscopic order inside the lattice.

The notion of phase separation, and hence the Arctic curve, acquires a precise
meaning in the scaling limit, i.e.,  when
the number of lines of the lattice (in each direction) tends to infinity and the
lattice spacing vanishes, while the total size of the system (in each direction)
is kept fixed. For the domain-wall six-vertex model
one can assume that the lattice is scaled to the square $[0,1]\times[0,1]$.
To describe points of this square we use coordinates $(x,y)$ with $x,y\in [0,1]$.
As in \cite{CP-09}, we consider the coordinate system in which the $y$-axis is reversed,
with the origin at the top left corner of the square.

The phase separation of domain-wall six-vertex model in its
disordered regime is characterized by appearance of
five regions in the scaling limit:
four regions of ferroelectric order, $\FO_\NW$, $\FO_\NE$, $\FO_\SE$, and $\FO_\SW$
in the four corners of the square, and one region of disorder $\DI$,
in the centre. The region of disorder is sharply delimited by a
curve  $\arctic$, called the Arctic curve. The Arctic curve and the square
have four contact points, located each one on a side of the square.
The arctic curve in the disordered regime consists of four portions,
\begin{equation}
\arctic=\varGamma_\NW\cup\varGamma_\NE\cup\varGamma_\SE\cup\varGamma_\SW,
\end{equation}
where $\varGamma_i$ separates the region $\FO_i$  ($i=\NW,\NE,\SE,\SW$) from
the internal region of disorder $\DI$.
Due to the symmetries of the model (e.g., crossing symmetry) given,
for example, portion $\varGamma_\NW$,
one can easily obtain the remaining three portions
$\varGamma_\NE$, $\varGamma_\SE$, and $\varGamma_\SW$
(for details, see \cite{CP-09}, section 2) and hence obtain the whole
curve $\arctic$.

To describe the curve $\varGamma_\NW$, one can introduce a
function $\Upsilon(x,y;\lambda)$ where $x$ and $y$ are coordinates in the
scaling limit and $\lambda$ is the parameter of the weights,
\begin{equation}\label{GNW}
\varGamma_\NW:\ \Upsilon(x,y;\lambda)=0,\qquad x,y\in[0,\kappa].
\end{equation}
Due to symmetries of the model, this function obeys
$\Upsilon(x,y;\lambda)=\Upsilon(y,x;\lambda)$.
The quantity $\kappa=\kappa(\lambda)$ gives
location of the contact points of the Arctic curve, in particularly, the
points $(\kappa,0)$ and $(0,\kappa)$ are the end-points of the curve
$\varGamma_\NW$. Explicitly (see \cite{CP-09}, section 3), it reads
\begin{equation}
\kappa=\frac{\alpha\cot\alpha(\lambda-\eta)-\cot(\lambda+\eta)}
{\cot(\lambda-\eta)-\cot(\lambda+\eta)},
\end{equation}
where
\begin{equation}
\alpha = \frac{\pi}{\pi-2\eta}.
\end{equation}
In the case of $\lambda=\pi/2$, i.e., when the weights $a$ and
$b$ are equal, see \eqref{abc}, one has $\kappa=1/2$ for all values of the
parameter $\eta$.

The explicit form of the function $\Upsilon(x,y;\lambda)$ is
significantly determined by the value of $\eta$. Furthermore,
it turns out that function $\Upsilon(x,y;\lambda)$ for generic values
of $\eta$ is a non-algebraic, or transcendental, function. This follows
from the parametric solution for the curve $\varGamma_\NW$, obtained in \cite{CP-09}.

Namely, let $\zeta$ be real and taking values in the interval $[0,\pi-\lambda-\eta]$.
Let us consider the function $f(x,y;\lambda;\zeta)$ which depends on $x$ and $y$
linearly, and is given explicitly by the formula
\begin{multline}\label{ef}
f(x,y;\lambda;\zeta)= \frac{x \sin 2 \eta }{\sin(\zeta+\lambda-\eta)
\sin(\zeta+\lambda+\eta)}
+ \frac{y \sin 2 \eta }{\sin \zeta \sin(\zeta+2\eta)}
\\
- \frac{\sin(\lambda+\eta)}{\sin\zeta\sin(\zeta+\lambda+\eta)}
+ \frac{\alpha \sin\alpha(\lambda-\eta)}{\sin\alpha\zeta\sin\alpha(\zeta+\lambda-\eta)}.
\end{multline}
The parameter $\zeta$ parameterizes the curve $\varGamma_\NW$
as it runs over the interval $[0,\pi-\lambda-\eta]$ while the curve is given in
the parametric form
\begin{equation}
x=X(\zeta),\qquad y=Y(\zeta).
\end{equation}
Writing simply $f(\zeta)$ for
function $f(x,y;\lambda;\zeta)$, the functions $X(\zeta)$ and $Y(\zeta)$ correspond to
the solution, for unknowns $x$ and $y$, of the linear system of equations
\begin{equation}\label{ffprime}
f(\zeta)=0,\qquad  f'(\zeta)=0,
\end{equation}
where the prime denotes derivative. For later use, we mention that
equations \eqref{ffprime} are equivalent to
the condition that function $f(\zeta)$ must have a double root;
each point of the curve $\varGamma_\NW$ corresponds to
a real value of this double root, in the interval $[0,\pi-\lambda-\eta]$.

Functions $X(\zeta)$ and $Y(\zeta)$ are
related to each other as $X(\zeta)=Y(\pi-\lambda-\eta-\zeta)$ that follows from the
obvious property
\begin{equation}
f(x,y;\lambda;\zeta)=f(y,x;\lambda;\pi-\lambda-\eta-\zeta),
\end{equation}
reflecting the $x\leftrightarrow y$ symmetry of the curve $\varGamma_\NW$.
Explicit expressions for functions $X(\zeta)$ and $Y(\zeta)$
can be found in \cite{CP-09}, see equations (6.16)--(6.19) therein; for our
discussion below expression \eqref{ef} and equations \eqref{ffprime} are
sufficient.

\section{`Root-of-unity' cases}

Let us now focus on the case when the parameter $\eta$ is such that the parameter
$\alpha$ is a rational number
\begin{equation}\label{nd}
\alpha=\frac{n}{d}\qquad (d<n),
\end{equation}
where $n$ and $d$ are some co-prime integers.
The parameter $\eta$ reads
\begin{equation}\label{eta}
\eta=\frac{\pi}{2}\cdot\frac{n-d}{n},
\end{equation}
and it corresponds to the so-called
six-vertex model at a `root of unity' (the root of unity here
is the deformation parameter $q$ of the underlying quantum group,
$q=\exp2\rmi\eta=-\exp(-\rmi\pi/\alpha)$).

To make the subsequent discussion simpler,
we start with rewriting function \eqref{ef} in a more symmetric form.
Namely, we
introduce parameters $\phi$ and $\varkappa$ related
to parameters $\zeta$ and $\lambda$ by a linear change of variables
\begin{equation}
\phi=\zeta-\varkappa,\qquad \varkappa=\frac{\pi-\lambda-\eta}{2}.
\end{equation}
Introducing function $g(x,y;\varkappa;\varphi):=f(x,y;\lambda;\zeta)$,
which we shall write simply as $g(\phi)$, we arrive at a more symmetric expression
\begin{multline}\label{gphi}
g(\phi)= \frac{x\sin2\eta}{\sin(\varkappa-\phi)\sin(\varkappa+2\eta-\phi)}
+\frac{y\sin2\eta}{\sin(\varkappa+\phi)\sin(\varkappa+2\eta+\phi)}
\\
-\frac{\sin2\varkappa}{\sin(\varkappa+\phi)\sin(\varkappa-\phi)}
+\frac{\alpha\sin2\alpha\varkappa}{\sin\alpha(\varkappa+\phi)
\sin\alpha(\varkappa-\phi)}.
\end{multline}
Note that new parameters run over the values
$\phi\in[-\varkappa,\varkappa]$ and $\varkappa\in(0,\pi/2\alpha)$, and
$\alpha$ ($1<\alpha<\infty$) in \eqref{gphi}
is still arbitrary. Evidently, equations \eqref{ffprime}
now read $g(\phi)=0$ and $g'(\phi)=0$ and we have simply reformulated
the parametric expression for the curve $\varGamma_\NW$
just by shifting the parameter of the curve.

Using the well-known identity
\begin{equation}
\sin n\phi=2^{n-1}\prod_{j=0}^{n-1}\sin\left(\phi+\frac{\pi j}{n}\right),
\end{equation}
where $n$ is an arbitrary positive integer, one can easily derive the identity
\begin{equation}\label{sinnd}
\sin\frac{n}{d}(\varkappa\pm\phi)=
\left(\sin\frac{\phi}{d}\right)^n
\sin\frac{n}{d}\varkappa
\prod_{j=0}^{n-1}\left(\cot\frac{\phi}{d}
\pm\cot\left(\frac{\varkappa}{d}+\frac{\pi j}{n}\right)\right).
\end{equation}
Formally setting here $n=d$, we also have the identity
\begin{equation}\label{sindd}
\sin(\varkappa\pm\phi)=
\left(\sin\frac{\phi}{d}\right)^d\sin\varkappa
\prod_{k=0}^{d-1}\left(\cot\frac{\phi}{d}
\pm\cot\left(\frac{\varkappa}{d}+\frac{\pi j}{d}\right)\right).
\end{equation}
Let us now rewrite the function $g(\phi)$ for the `root-of-unity' cases
using relations \eqref{sinnd} and \eqref{sindd}. Denoting
\begin{equation}
t=\cot\frac{\phi}{d},
\end{equation}
and taking into account that $\sin^2(\phi/d)=(t^2+1)^{-1}$,
we arrive at the following formula:
\begin{multline}\label{tformlong}
g(\phi)=
x\rho\,\frac{\big(t^2+1\big)^d}{\prod_{k}^{}(t-v_k)(t-u_k)}
+y\rho\,\frac{\big(t^2+1\big)^d}{\prod_{k}^{}(t+v_k)(t+u_k)}
\\
-2\cot\varkappa\,
\frac{\big(t^2+1\big)^d}{\prod_{k}^{}(t-v_k)(t+v_k)}
+2\alpha\cot\alpha\varkappa\,
\frac{\big(t^2+1\big)^n}{\prod_{j}^{}(t-w_j)(t+w_j)}.
\end{multline}
Here
\begin{equation}
\rho=\frac{\sin2\eta}{\sin\varkappa\sin(\varkappa+2\eta)}
\end{equation}
and the numbers $v_k$, $u_k$ ($k=0,\dots,d-1$)
and $w_j$ ($j=0,\dots,n-1$) are
\begin{align}\label{vuw}
v_k&=\cot\left(\frac{\varkappa}{d}+\frac{\pi k}{d}\right),
\notag\\
u_k&=\cot\left(\frac{\varkappa}{d}+\frac{2\eta}{d}+\frac{\pi k}{d}\right),
\notag\\
w_j&=\cot\left(\frac{\varkappa}{d}+\frac{\pi j}{n}\right).
\end{align}
The products in \eqref{tformlong} are taken over the indicated values of the
integers $j$ or $k$.

Clearly, from formula \eqref{tformlong} it follows that function
$g(\phi)$ is a rational function of the variable $t$, with
the following structure
\begin{equation}\label{tformshort}
g(\phi)=\frac{\big(t^2+1\big)^d}{Q(t)}P(t).
\end{equation}
Here $P(t)$ and $Q(t)$ are polynomials:  $Q(t)$ is the common
denominator of the four terms in \eqref{tformlong} while $P(t)$ is the
resulting numerator.
Evidently, the condition that function  $g(\phi)$ has a double real root
in the interval $\phi\in[-\varkappa,\varkappa]$,
which provides a parametric form for the curve $\varGamma_\NW$,
now translates into the condition that polynomial $P(t)$ has
a double  real root in the interval $t\in[-\infty,-v_0]\cup[v_0,\infty]$.

It is well known a polynomial has a double root (not necessarily real)
if and only if its  discriminant is equal to zero. The discriminant, in turn,
is a homogenous polynomial in the coefficients of the polynomial. The
coefficients of $P(t)$ are linear functions of $x$ and $y$,
and therefore requiring the discriminant of $P(t)$ to be equal to zero
provides an equation for an algebraic curve. The portion of such algebraic curve
corresponding to the additional requirement that  the double root
must lie in the  real interval $t\in[-\infty,-v_0]\cup[v_0,\infty]$
is the portion $\varGamma_{\NW}$ of the Arctic curve,
lying in the region $x,y\in[0,\kappa]$ of the unit square.

To be more specific,  let  $P(t)$ be a polynomial of degree $m$ ($m\geq 2$) with
non-vanishing leading coefficient, $p_m\ne 0$.
The discriminant of $P(t)$  can be written as
\begin{equation}\label{DtPP'}
D_m(P)=(-1)^{m(m-1)/2}\det S_{m-1,m-1}(\tilde P,P').
\end{equation}
where $S_{m-1,m-1}(\tilde P,P')$ is the Sylvester matrix of the two  polynomials
$\tilde P(t)=mP(t)-tP'(t)$  and $P'(t)$,  where the prime denotes derivative.
Further details are given in the appendix.
The condition that $P(t)$ has a double root is just
\begin{equation}
D_m(P)=0.
\end{equation}
In view of the above considerations, we conclude that
in the  `root-of-unity' cases this equation contains equation \eqref{GNW},
with $D_m(P)$ providing  (modulo the problem of reducibility)
an explicit expression for function $\Upsilon(x,y;\lambda)$.

In the next section we consider several interesting
examples of the `root-of-unity' cases, namely,
we consider the cases of $\alpha=2,3,3/2,4,5/2$.
The case of $\alpha=2$ is the free-fermion point and the Arctic curve is the
Arctic Ellipse (or Arctic Circle, at $\lambda=\pi/2$), see \cite{CP-07a,CP-08}
and references therein.
The cases $\alpha=3,3/2$, at $\lambda=\pi/2$,
were already treated in papers \cite{CP-08,CP-09}; here we discuss
the case of generic $\lambda$. The cases of $\alpha=4, 5/2$ are considered for
the first time.

In particular, we find that the degree
of $P(t)$ is significantly determined by the integers $n$ and $d$
defining the parameter $\alpha=n/d$. We also find that
at $\lambda=\pi/2$ a simplification in  the degree may occur.
Motivated by the examples considered below,
we discuss the upper bounds on the degree of algebraic equations
for the Arctic curve in section 5.

\section{Examples}

\subsection{Case $\alpha=2$ ($\eta=\pi/4$ or $\Delta=0$)}

In this case the numbers $v_k$, $u_k$ ($k=0,\dots,d-1$) and $w_j$ ($j=0,\dots,n-1$),
see \eqref{vuw}, which describe location of poles are ($d=1$ and $n=2$):
\begin{equation}
v_0=\cot\varkappa,\qquad
u_0=-\tan\varkappa,\qquad
w_0=\cot\varkappa,\qquad
w_1=-\tan\varkappa.
\end{equation}
Since $v_0=w_0$ and $u_0=w_1$, the common denominator is
given by the denominator of the last term in \eqref{gphi}, and hence the polynomial
$P(t)$ is quadratic. Computing, we obtain
\begin{equation}
P(t)=\frac{2}{\sin2\varkappa}\big[(x+y-1+\cos2\varkappa)t^2
+2 \cot2\varkappa\cdot t-(x+y-1-\cos2\varkappa)\big].
\end{equation}
Evaluating the discriminant of $P(t)$ and cancelling an overall factor, for
the Arctic curve we find
\begin{equation}\label{AE}
\frac{(x+y-1)^2}{\cos^2 2\varkappa}+\frac{(x-y)^2}{\sin^2 2\varkappa}-1=0.
\end{equation}
This is of course the well known result of the Arctic Ellipse
(see, e.g., \cite{CP-07a} and references therein).

\subsection{Case $\alpha=3$ ($\eta=\pi/3$ or $\Delta=-1/2$)}

In this case the denominators in \eqref{tformlong} are governed by numbers
$v_0$, $u_0$, $w_0$, $w_1$, and $w_2$, and we have relations
\begin{equation}
v_0=w_0,\qquad u_0=w_2.
\end{equation}
Thus the common denominator is given by the denominator of the last term
in \eqref{tformlong}
and in this case $P(t)$ is a polynomial of degree $4$. After some algebra
we find
\begin{align}
P(t)&=\frac{2}{\sin\!^2 3\nu}\big\{
\sin3\nu\big[\sin\nu\cdot\Lambda  -\cos\nu (1-4\cos2\nu)\big]t^4
\notag\\ &\quad
+4 \sin\!^2\nu\sin2\nu\cdot\Theta\, t^3
-\big[3\cos2\nu\cdot\Lambda +\sin2\nu(1-4\cos4\nu)\big]t^2
\notag\\ &\quad
-4 \cos\!^2\nu\sin2\nu\cdot\Theta\, t
+\cos3\nu\big[\cos\nu\cdot \Lambda +\sin\nu (1+4\cos2\nu)\big]
\big\}
\end{align}
where
\begin{equation}
\Lambda=\sqrt{3}(1-x-y),\qquad
\Theta=\sqrt{3}(x-y),\qquad
\nu=\varkappa+\frac{\pi}{3}.
\end{equation}
Evaluating the discriminant of this polynomial and equating it to zero gives
the equation for Arctic curve. In this case it is of degree $6$. The explicit
expression for generic $\lambda$ (or $\varkappa$) is too cumbersome to be
given here (though it is just of degree $6$, each coefficient is a
complicated function of $\varkappa$).

In the special case of $\lambda=\pi/2$, that corresponds to the choice
$\varkappa=\pi/12$,
the expression for the Arctic curve simplifies considerably and one arrives at equation
\begin{multline}
324 \big(x^6 + y^6\big)+ 1620 \big(x^5 y + x y^5\big)
+ 3429 \big(x^4 y^2 + x^2 y^4\big) + 4254 x^3 y^3
\\
-972\big(x^5+y^5\big)
-1458\big(x^4y+xy^4\big)-2970\big(x^3y^2+x^2y^3\big)
-6147\big(x^4+ y^4\big)
\\
-9150\big(x^3 y + x y^3\big)
-17462 x^2 y^2 + 13914 \big(x^3+ y^3\big)
+ 24086 \big(x^2 y+x y^2\big)
\\
-11511\big(x^2+ y^2\big)-17258xy+4392(x+y)-648
=0.
\end{multline}
This formula describes the limit shape of $3$-enumerated alternating-sign
matrices and it was obtained for the first time in \cite{CP-08}.

\subsection{Case $\alpha=3/2$ ($\eta=\pi/6$ or $\Delta=1/2$)}

In this case, with  $\lambda$ (or $\varkappa$) is generic, the polynomial
$P(t)$ is of degree $6$. Indeed, for $d=2$ and $n=3$ the
denominators in \eqref{tformlong} are controlled by
seven numbers $v_0$, $v_1$, $u_0$, $u_1$, $w_0$, $w_1$, and $w_2$ among those
there are two pairs of coinciding ones,
\begin{equation}\label{}
v_0=w_0, \qquad u_1=w_2.
\end{equation}
Already the resulting polynomial $P(t)$ is too lengthy in its coefficients to
be presented here, not even to mention its discriminant, giving the
Arctic curve, which is of degree $10$ in this case.

This curve is however interesting in the fact that
it factorizes when one specializes to the case of $\lambda=\pi/2$. Namely,
an expression of degree $10$ factorizes into several factors
among which  only one factor, of degree $2$, corresponds to the Arctic curve,
while the remaining factors are non-vanishing for the allowed values
of $x$ and $y$.

The origin of such factorization becomes evident if one comes back to formula
\eqref{tformlong}. Indeed, at $\lambda=\pi/2$, which corresponds to $\varkappa=\pi/6$,
we have two more additional relations
\begin{equation}
v_1=-w_1,\qquad u_0=-w_2.
\end{equation}
Correspondingly, in this special case the polynomial $P(t)$ is quadratic, and
it reads
\begin{equation}
P(t)=\sqrt{3}(x+y-2+\sqrt{3})t^2 +6 (x-y)t -\sqrt{3}(x+y-2-\sqrt{3}).
\end{equation}
Evaluating the discriminant, and setting  it equal to zero, we obtain equation
\begin{equation}\label{ASMs}
(x+y-2)^2+3(x-y)^2-3=0.
\end{equation}
Evidently, this equation describes an ellipse inscribed into square $[0,2]\times[0,2]$;
we recall that the Arctic curve is given only by a portion of this ellipse, for
which $x,y\in[0,1/2]$.

Equation \eqref{ASMs} is an important example of  Arctic curve, since it describes
the limit shape of large alternating-sign matrices (within $1$-enume\-ration scheme).
Equation \eqref{ASMs} was obtained previously in papers \cite{CP-08,CP-09}.
Interestingly, expanding terms in \eqref{ASMs} we obtain
\begin{equation}
4x^2-4xy+ 4y^2-4(x+y) +1=0,
\end{equation}
which differs just in a single term, namely $-4xy$, from equation \eqref{AE} at
$\lambda=\pi/2$ (corresponding to $\varkappa=\pi/8$ in that case).

\subsection{Case $\alpha=4$ ($\eta=3\pi/8$ or  $\Delta=-\sqrt{2}/2$)}

In this case the denominators in \eqref{tformlong} are controlled by the
numbers $v_0$, $u_0$ and  $w_0,\dots,w_3$. Similarly to the cases of
$\alpha=2$ and $\alpha=3$, the common denominator is
given by denominator of the last term in \eqref{tformlong}, due to the
relations $v_0=w_0$ and $u_0=w_3$. Now the polynomial $P(t)$ has
degree $6$, and the  Arctic curve is given by equation of degree $10$.
At $\lambda=\pi/2$ no simplification of the degree of $P(t)$ is
possible, and the Arctic curve remains of degree $10$; the explicit
expression for the equation, which  contains irrational coefficients involving
$\sqrt{2}$,  is lengthy and uninformative.

The case of $\alpha=4$ has been included in the list of our examples to show some
common properties of the cases of integer values of $\alpha$, namely, that the
common denominator is given by the denominator of the last term in
\eqref{tformlong}, and that no simplification of the degree occurs at
$\lambda=\pi/2$. For these reasons considering the integer values of
$\alpha$  is in fact not illuminating; the cases
$\alpha=2$ and $\alpha=3$ are somewhat exceptional, in view of their
well-known relations to important  enumeration problems in combinatorics.

\subsection{Case $\alpha=5/2$ ($\eta=3\pi/10$ or $\Delta=(1-\sqrt{5})/4$)}

In this case, for generic values of $\lambda$, the polynomial $P(t)$ is of degree $10$
and, correspondingly, the Arctic curve is given by equation of degree $18$.
We restrict ourselves henceforth to the case of $\lambda=\pi/2$ where
the degree of polynomial $P(t)$ is just $6$. Indeed, for $\lambda=\pi/2$
we have in this case $\varkappa=\pi/10$ and for the numbers
$v_0$, $v_1$, $u_0$, $u_1$ and $w_0,\dots,w_4$ determining the denominators of terms in
\eqref{tformlong} we have relations
\begin{equation}
v_0=w_0,\qquad
u_1=w_4,\qquad
v_1=-w_2,\qquad
u_0=-w_3.
\end{equation}
Due to these relations the common denominator $Q(t)$ in this case is
given by denominator of the last term in \eqref{tformlong}. Computing the
polynomial $P(t)$ we obtain the expression
\begin{align}\label{Palpha52}
P(t)&=\rho\bigg\{
\bigg(x+y-\frac{1+\sqrt{5}}{2}+\sqrt{5} \sigma\bigg)t^6
+8  \sigma  (x-y)t^5
\notag\\ &\qquad
- \bigg(\big(3+4 \sqrt{5}\big)(x+y)+\frac{17-15 \sqrt{5}}{2} -3\sqrt{5} \sigma \bigg) t^4
-16  \sigma  (x-y) t^3
\notag\\ &\qquad
+\bigg(\big(3+4 \sqrt{5}\big)(x+y)+\frac{17-15 \sqrt{5}}{2}+3\sqrt{5} \sigma \bigg) t^2
+8  \sigma  (x-y)t
\notag\\ &\qquad
-\bigg(x+y -\frac{1+\sqrt{5}}{2} -\sqrt{5} \sigma\bigg)
\bigg\},
\end{align}
where $\rho=\sqrt{2(5+\sqrt{5})}$ and $\sigma=\sqrt{\frac{5-\sqrt{5}}{8}}=\sqrt{5}\rho^{-1}$.

The discriminant of polynomial \eqref{Palpha52}, which is of degree $10$ in $x$ and $y$,
appears to  factorize into two polynomials, of degrees $2$ and $8$. In other words, the
corresponding algebraic curve has two components. The quadratic factor is
\begin{multline}
2 x^2- (1 - \sqrt{5}) x y + 2 y^2 - 2x-2y +3 - \sqrt{5}
\\
=\frac{3 + \sqrt{5}}{4} \big(x + y - 3 + \sqrt{5}\big)^2 + \frac{5 -\sqrt{5}}{4}  (x-y)^2
\end{multline}
where the second line shows that it is positive for all values of $x$ and
$y$, except point $x=y=(3-\sqrt{5})/2$ where it vanishes. It is easily verified
that this component (consisting in just one point) corresponds to the
double root $t=i$ (or $t=-i$) of $P(t)$. Hence, the Arctic curve
is given by the other component, i.e. by the factor degree of $8$. The equation for the
Arctic curve reads:
\begin{multline}
128 \big(x^8+y^8\big)
-512 \big(1-\sqrt{5}\big)\big(x^7 y+x y^7\big)
+1536 \big(x^6 y^2+x^2 y^6\big)
\\
+512 \big(25-9 \sqrt{5}\big) \big(x^5 y^3+x^3 y^5\big)
+256 \big(83-32 \sqrt{5}\big) x^4 y^4
-512 \big(x^7+y^7\big)
\\
+512 \big(13-8 \sqrt{5}\big) \big(x^6 y+x y^6\big)
-512 \big(65-24 \sqrt{5}\big) \big(x^5 y^2+x^2 y^5\big)
\\
-512 \big(267-112 \sqrt{5}\big) \big(x^4 y^3+x^3 y^4\big)
-96 \big(74-29 \sqrt{5}\big)\big(x^6+y^6\big)
\\
+64\big(435-182 \sqrt{5}\big) \big(x^5 y+x y^5\big)
-32 \big(2874-1313 \sqrt{5}\big) \big(x^4 y^2+x^2 y^4\big)
\\
+128 \big(7233-3214 \sqrt{5}\big) x^3 y^3
+32 \big(722-261 \sqrt{5}\big) \big(x^5+y^5\big)
\\
+32 \big(6132-2663 \sqrt{5}\big) \big(x^4 y+x y^4\big)
-64 \big(6911-3154 \sqrt{5}\big) \big(x^3 y^2+x^2 y^3\big)
\\
-4 \big(12403-5205 \sqrt{5}\big) \big(x^4+y^4\big)
-4 \big(97677-42727 \sqrt{5}\big)\big(x^3 y+x y^3\big)
\\
-16 \big(18381-8009 \sqrt{5}\big) x^2 y^2
+360 \big(167-77 \sqrt{5}\big) \big(x^3+y^3\big)
\\
+16\big(58205-26052 \sqrt{5}\big) \big(x^2 y+x y^2\big)
-\big(31813-15751 \sqrt{5}\big) \big(x^2+y^2\big)
\\
-2 \big(385088-173179 \sqrt{5}\big) x y
+\big(5689-3283 \sqrt{5}\big) (x+y)
+85 \sqrt{5}+87
=0.
\end{multline}

Even though the resulting equation is not illuminating, the whole example
of the case of $\alpha=5/2$ appears very instructive. Indeed, we observe again,
like in the case of $\alpha=3/2$,
a reduction in the degree of polynomial $P(t)$ at $\lambda=\pi/2$.
Furthermore, we observe that the Arctic curve can be of a lower degree, with
respect to the degree expected from the one of $P(t)$, due to additional factorization.

As we show below, the drop down of degrees of polynomials $P(t)$ is a
common feature of all cases with half-integer values of $\alpha$ at
$\lambda=\pi/2$. On the other hand the additional factorization of its discriminant
cannot be tackled by simple means. It corresponds
to the existence of a family of complex double roots for $P(t)$,
and hence to the existence of an additional component in the algebraic curve
defined by the condition of vanishing resultant for $P(t)$.
The identification and elimination of such additional component can be performed,
at least in principle, in each specific case, but no general procedure is available.

\section{Upper bounds for degrees of the curves}

In the examples considered above it is evident that the degree of polynomial
$P(t)$ and hence the degree for the Arctic curve as equation on $x$ and $y$,
essentially depends on the values of $d$ and $n$, the co-prime integers determining
value of the parameter $\alpha$.

Using formulae of section 3 it is fairly easy to find an expression for the
common denominator in \eqref{tformshort}. In this way one can infer general properties
of the numerator and hence find the degree of $P(t)$.

Assuming that we are in the case of generic
$\lambda$ (or $\varkappa$) we have relations
\begin{equation}\label{v0w0}
v_0=w_0,\qquad u_{d-1}=w_{n-1}.
\end{equation}
Indeed, the first relation is immediate, while the second follows from $\pi$-periodicity
of cotangent function and formula
\begin{equation}
\frac{2\eta}{d}=\frac{\pi}{d}-\frac{\pi}{n}.
\end{equation}
Since $\varkappa$ is assumed to be arbitrary, no further relations
among numbers $v_0,\dots,v_{d-1}$, $u_0,\dots,u_{d-1}$, and $w_0,\dots,w_{n-1}$ are
possible.

Having relations \eqref{v0w0} we obtain that the common denominator
$Q(t)$ in formula \eqref{tformshort} reads
\begin{equation}\label{Qgen}
Q(t)=\prod_{j=0}^{n-1}(t-w_j)(t+w_j)\prod_{k=1}^{d-1}(t-v_k)(t+v_k)
\prod_{k=0}^{d-2}(t-u_k)(t+u_k).
\end{equation}
Using this expression one can easily see that each of the four terms in
\eqref{tformlong} contributes to the numerator with the terms of the same
degree in $t$. The result can be summarised as follows.
\begin{proposition}
In the case of $\alpha=n/d$ where $n$ and $d$ are co-prime integers and
$\lambda$ is generic, the polynomial P(t) in \eqref{tformshort}
has degree less than  or equal to $2d+2n-4$.
\end{proposition}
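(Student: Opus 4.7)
The plan is to compute, term by term, the degree of the numerator obtained when the four summands in \eqref{tformlong} are brought over the common denominator $Q(t)$ given by \eqref{Qgen}, and then divided by the factor $(t^2+1)^d$ that appears explicitly in \eqref{tformshort}. Since $P(t)$ is by definition the resulting numerator, it suffices to bound the degree contributed by each term separately.

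First I would write $P(t)=\sum_{i=1}^{4} P_i(t)$ with $P_i(t)$ coming from the $i$-th summand of \eqref{tformlong}. For the first summand we have
\begin{equation*}
P_1(t)=x\rho\,\frac{Q(t)}{\prod_{k=0}^{d-1}(t-v_k)(t-u_k)},
\end{equation*}
and I would check that the denominator divides $Q(t)$ exactly: by \eqref{v0w0} the factors $(t-v_0)$ and $(t-u_{d-1})$ are provided by the $(t-w_0)$ and $(t-w_{n-1})$ pieces of $Q$, while the remaining factors $(t-v_k)$ with $1\leq k\leq d-1$ and $(t-u_k)$ with $0\leq k\leq d-2$ are already present in $Q$. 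Counting what survives in $Q(t)$ after the cancellation gives
\begin{equation*}
\deg P_1 = n + (n-2) + (d-1) + (d-1) = 2n+2d-4.
\end{equation*}
By the $x\leftrightarrow y$ symmetry, $P_2(t)$ is bounded in the same way.

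For the third summand, the divisor is $\prod_{k=0}^{d-1}(t-v_k)(t+v_k)$. The factors with $k=0$ are supplied by $(t\mp w_0)$ in the first product of \eqref{Qgen}, and those with $k\geq 1$ by the second product. What is left of $Q(t)$ after dividing is
$\prod_{j=1}^{n-1}(t-w_j)(t+w_j)\cdot\prod_{k=0}^{d-2}(t-u_k)(t+u_k)$, which again has degree $2(n-1)+2(d-1)=2n+2d-4$. For the fourth summand, the extra feature is the factor $(t^2+1)^{n-d}$ that survives after cancelling $(t^2+1)^d$; the quotient $Q(t)/\prod_{j}(t-w_j)(t+w_j)$ equals $\prod_{k=1}^{d-1}(t-v_k)(t+v_k)\cdot\prod_{k=0}^{d-2}(t-u_k)(t+u_k)$, of degree $4d-4$, so
\begin{equation*}
\deg P_4 \leq 2(n-d)+(4d-4)=2n+2d-4.
\end{equation*}

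Since all four contributions have degree at most $2n+2d-4$, the same bound holds for $P(t)$. The step that warrants most care is merely the bookkeeping of which linear factor of the divisor of a given term is supplied by which piece of \eqref{Qgen}; once the relations \eqref{v0w0} are invoked, the exact divisibility and hence the degree count follow mechanically. No genuine obstacle arises beyond this accounting; one should of course note that the bound need not be attained, since cancellations among leading coefficients of the $P_i(t)$ may lower the actual degree (as is explicitly observed in several of the examples in Section~4).
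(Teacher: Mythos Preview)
Your proposal is correct and follows exactly the same approach as the paper's proof: both argue that each of the four terms in \eqref{tformlong}, when brought over the common denominator $Q(t)$ of \eqref{Qgen} and stripped of the factor $(t^2+1)^d$, contributes a polynomial of degree $2n+2d-4$. The paper states this in one sentence, while you have carried out the factor-by-factor bookkeeping explicitly; the substance is identical.
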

\begin{proof}
A comparison of formulae \eqref{tformlong}, \eqref{tformshort}, and
\eqref{Qgen} shows that each of the four terms in expression
\eqref{tformlong} contributes to the numerator $P(t)$ a term of degree
$2d+2n-4$. For some values of $\varkappa$ the polynomials $P(t)$ and $Q(t)$
may have common roots with respect to the variable $t$, hence the number
$2d+2n-4$ provides the upper bound for the degree of $P(t)$.
\end{proof}

As a consequence we have the following upper
bound on the degree of the Arctic curve in the generic `root-of-unity' case.
\begin{corollary}
In the case of $\alpha=n/d$ where $n$ and $d$ are co-prime integers and
$\lambda$ is generic, the function $\Upsilon(x,y;\lambda)$ in \eqref{GNW} is
a polynomial in $x$ and $y$ of total degree less than or equal to $4(n+d)-10$.
\end{corollary}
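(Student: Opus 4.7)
The plan is to combine the Proposition with the standard fact about degrees of discriminants. First I would observe, from inspection of formula \eqref{tformlong}, that each of the four rational summands making up $g(\phi)$ is affine in $(x,y)$: the third and fourth terms depend only on $\varkappa$, while the first and second are proportional to $x$ and $y$ respectively. Multiplying through by the common denominator $Q(t)$ therefore produces a polynomial $P(t)$ whose coefficients, viewed as functions of $(x,y)$, are themselves affine — at worst of the form $\lambda_i(\varkappa)x + \mu_i(\varkappa)y + \nu_i(\varkappa)$. Thus writing $P(t)=\sum_{i=0}^m p_i(x,y)\,t^i$, each $p_i$ is a polynomial of total degree at most $1$ in $x,y$, and by the Proposition we have $m\leq 2n+2d-4$.

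Next I would invoke the classical fact (which can be read off, e.g., from the Sylvester-matrix formula \eqref{DtPP'} recalled in the appendix) that the discriminant $D_m(P)$ is a homogeneous polynomial of degree $2m-2$ in the coefficients $p_0,\dots,p_m$ of $P$ when each $p_i$ is assigned degree $1$. Substituting the affine expressions $p_i(x,y)$ into $D_m(P)$ therefore yields a polynomial in $x,y$ whose total degree is bounded above by
\begin{equation*}
2m-2 \;\leq\; 2(2n+2d-4)-2 \;=\; 4(n+d)-10.
\end{equation*}

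Finally, I would recall the conclusion of section 3: the curve $\varGamma_{\NW}$ is cut out by the vanishing of $D_m(P)$, modulo possible reducibility. In other words, $\Upsilon(x,y;\lambda)$ is a factor of $D_m(P)$ (the factor whose real zero locus carries the values of $t$ in $[-\infty,-v_0]\cup[v_0,\infty]$). Since the total degree of any factor of a polynomial is bounded by the total degree of the polynomial, the bound on $D_m(P)$ descends to $\Upsilon$ and yields $\deg\Upsilon\leq 4(n+d)-10$.

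The only subtle point — though it does not affect the upper bound — is the dependence of $m=\deg_t P$ on $\varkappa$: it is in principle possible that for special values of $\varkappa$ the leading coefficient $p_m(x,y)$ vanishes identically, in which case the true degree of $P(t)$ drops and so does that of $D_m(P)$. For the purposes of the stated inequality this only improves the bound, so the argument above goes through uniformly; I would simply remark on this and close the proof.
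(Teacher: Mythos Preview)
Your argument is correct and is precisely the one the paper has in mind: the corollary is stated there without proof as an immediate consequence of the Proposition, and your write-up simply makes explicit the standard step that the discriminant of a degree-$m$ polynomial is homogeneous of degree $2m-2$ in its coefficients (visible from the $(2m-2)\times(2m-2)$ Sylvester determinant \eqref{DtPP'}), together with the observation that those coefficients are affine in $x,y$. Your remarks on factorization and on the possible drop of $\deg_t P$ for special $\varkappa$ are accurate and, as you note, only strengthen the inequality.
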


Let us now address a more refined question about the degree of $P(t)$ in
the special case of $d=2$ and $n$ odd (i.e., when $\alpha$ is
half-integer) at $\lambda=\pi/2$. As we have seen in our examples above,
the degree drops down, in comparison with the case of
generic $\lambda$. We also observed that no such drop down of the degree
occurs, e.g., in the cases of integer values of $\alpha$.

Here we have the following result.

\begin{proposition}
If $\alpha=n/2$ where $n$ is odd, and $\lambda=\pi/2$, then
\begin{equation}\label{2n4}
\deg P(t)= 2n-4.
\end{equation}
\end{proposition}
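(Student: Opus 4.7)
The plan is to prove $\deg P(t) = 2n-4$ in two steps: first, show that the common denominator $Q(t)$ in \eqref{tformshort} collapses to a polynomial of degree exactly $2n$ at this specialization, which via a degree comparison forces $\deg P \le 2n-4$; then, verify that the leading coefficient of $P(t)$ is a nontrivial polynomial in $x,y$, giving equality.

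For the first step, with $d=2$ and $\lambda = \pi/2$ one finds $\varkappa = \pi/(2n)$ and $\eta = \pi/2 - \pi/n$. Substituting these values into \eqref{vuw} and applying $\cot(\pi/2+\theta) = -\tan\theta$ together with the $\pi$-periodicity of the cotangent, I would verify two additional coincidences
\[
v_1 = -w_{(n-1)/2}, \qquad u_0 = -w_{(n+1)/2},
\]
beyond the generic ones $v_0 = w_0$, $u_1 = w_{n-1}$ of \eqref{v0w0}. Since $n$ is odd and $n \ge 3$, the indices $(n\pm 1)/2$ are integers in $\{0, \ldots, n-1\}$, so these relations are meaningful. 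Taken together, the four coincidences show that every $\pm v_k$ and $\pm u_k$ appearing in the denominators of the first three terms of \eqref{tformlong} already belongs to $\{\pm w_j\}_{j=0}^{n-1}$, so the least common denominator collapses to
\[
Q(t) = \prod_{j=0}^{n-1}(t - w_j)(t + w_j),
\]
of degree exactly $2n$. This trigonometric verification is the principal obstacle; the remainder of the argument is essentially a degree count.

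For the second step, observe that each of the four terms of \eqref{tformlong} has numerator degree equal to denominator degree in $t$, so $g(\phi)$ admits a finite limit as $t \to \infty$, i.e.\ as $\phi \to 0$. Combined with $d=2$ and $\deg Q = 2n$, the identity $g(\phi) = (t^2+1)^d P(t)/Q(t)$ immediately yields $\deg P \le 2n-4$, and the leading coefficient of $P(t)$ equals $\lim_{t\to\infty} g(\phi) = g|_{\phi=0}$. A direct evaluation of \eqref{gphi} at $\phi=0$ gives
\[
g|_{\phi = 0} = \rho(x+y) - 2\cot\varkappa + 2\alpha \cot\alpha\varkappa,
\]
with $\rho = \sin 2\eta / (\sin\varkappa \sin(\varkappa+2\eta)) \ne 0$ (an elementary check at $\varkappa = \pi/(2n)$, using $n \ge 3$). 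This is a nonconstant linear function of $x+y$, so the leading coefficient of $P(t)$ does not vanish identically, and $\deg P = 2n-4$ as claimed.
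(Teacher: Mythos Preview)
Your proof is correct and follows essentially the same route as the paper: you identify the two extra coincidences $v_1=-w_{(n-1)/2}$ and $u_0=-w_{(n+1)/2}$ at $\varkappa=\pi/(2n)$, conclude that $Q(t)$ collapses to the denominator of the last term in \eqref{tformlong}, and then read off the degree. The one place where you go slightly further than the paper is in explicitly computing the leading coefficient of $P(t)$ as $g|_{\phi=0}=\rho(x+y)-2\cot\varkappa+2\alpha\cot\alpha\varkappa$ to certify the exact equality $\deg P=2n-4$; the paper simply states the result after ``inspecting the degrees of the numerators'' without spelling out why the top-degree contributions do not cancel, so your treatment is a modest refinement rather than a different approach.
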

\begin{proof}
We start with noting that for arbitrary $\alpha=d/n$
the value $\lambda=\pi/2$ corresponds to $\varkappa=\pi/4\alpha$ and hence we have
\begin{equation}\label{pi4n}
\frac{\varkappa}{d}=\frac{\pi}{4n}.
\end{equation}
Let us now set $d=2$ and consider $n$ odd. In addition to the relations
$v_0=w_0$ and $u_1=w_{n-1}$ which are just relations \eqref{v0w0}, in the presently
considered case we have, in virtue of \eqref{pi4n}, the relations
\begin{equation}
v_1=-w_{\frac{n-1}{2}},\qquad u_0=-w_{\frac{n+1}{2}}.
\end{equation}
Due to these two additional relations we see that the common denominator
$Q(t)$ in \eqref{tformshort} coincides with the denominator of  the last
term in \eqref{tformlong}, that is
\begin{equation}
Q(t)=\prod_{j=0}^{n-1}(t-w_j)(t+w_j).
\end{equation}
Recalling that $d=2$ and inspecting the degrees of the numerators we thus
arrive at \eqref{2n4}.
\end{proof}

Therefore we have obtained the following statement.
\begin{corollary}
If $\alpha=n/2$ where $n$ is odd, and $\lambda=\pi/2$, then
the Arctic curve is given by an equation in $x$ and $y$ of
degree less than or equal to $4n-10$.
\end{corollary}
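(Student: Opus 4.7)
The plan is to combine the degree bound on $P(t)$ furnished by Proposition~2 with the standard homogeneity property of the discriminant, exactly in the same way as the preceding Corollary~1 was deduced from Proposition~1.

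First I would record the general principle: if $P(t)$ is a polynomial of degree $m$ with coefficients $p_0,\ldots,p_m$, then its discriminant $D_m(P)$ is a homogeneous polynomial in $p_0,\ldots,p_m$ of total degree $2m-2$ (this is immediate from the Sylvester-matrix formula \eqref{DtPP'}, since $S_{m-1,m-1}(\tilde P,P')$ is a $(2m-2)\times(2m-2)$ matrix each of whose rows is linear in the coefficients of $P$). This is the only abstract fact I need.

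Next I would invoke the structure of $P(t)$ from section~3: by inspection of \eqref{tformlong}, each of its coefficients is a linear (inhomogeneous) function of $x$ and $y$, with the remaining ingredients depending only on $\varkappa$ and $\alpha$. Consequently, substituting these linear forms into $D_m(P)$ yields a polynomial in $x$ and $y$ of total degree at most $2m-2$.

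Now I would specialize to the hypothesis $\alpha=n/2$ with $n$ odd and $\lambda=\pi/2$. Proposition~2 gives $\deg P(t)=2n-4$, so applying the bound of the previous paragraph with $m=2n-4$ yields a polynomial in $x,y$ of total degree at most $2(2n-4)-2=4n-10$. Since by the discussion following \eqref{tformshort} the equation $D_m(P)=0$ contains the equation $\Upsilon(x,y;\lambda)=0$ of the Arctic curve, the bound $4n-10$ holds for $\varGamma_\NW$ (and hence for the Arctic curve) as claimed.

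No real obstacle is expected: the proof is a two-line deduction from Proposition~2 plus the universal degree count for discriminants, parallel in structure to the proof of Corollary~1. The only point that requires a small verbal remark is that the bound is an \emph{upper} bound: possible factorizations of $D_m(P)$ (as observed in the $\alpha=5/2$ example) or cancellations between $P(t)$ and $Q(t)$ can only decrease the effective degree, so the inequality is preserved.
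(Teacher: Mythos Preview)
Your proposal is correct and follows exactly the approach the paper has in mind: the paper states Corollary~2 with the phrase ``Therefore we have obtained the following statement'' immediately after Proposition~2, leaving the discriminant-degree count implicit in complete parallel with Corollary~1. Your explicit justification via the homogeneity degree $2m-2$ of $D_m(P)$ and the linearity of the coefficients of $P(t)$ in $x,y$, together with the remark that factorization can only lower the degree, merely spells out what the paper takes for granted.
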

Note that here   we cannot claim exact equality, conversely to formula
\eqref{2n4}.
As a matter of fact, the exact  degree of the curve
can hardly be controlled, since it is difficult to find in
general the number of components of the algebraic curve resulting from the
condition of vanishing discriminant of $P(t)$. Indeed, as we have learned from the
example of  $\alpha=5/2$, the occurrence of more than one component
immediately implies that the  Arctic curve has lower degree with respect to
the one implied by the degree of $P(t)$. An interesting open question
concerns  therefore the possibility of providing some classification for the
irreducible algebraic curves associated to the expression
$\Upsilon(x,y;\lambda)$ for the Arctic curve.

\section*{Acknowledgments}

The first author (FC) acknowledges partial support from MIUR, PRIN grant 2007JHLPEZ,
and from the European Science Foundation program INSTANS.
The third author (AGP) was supported by the Alexander von Humboldt Foundation.
AGP also acknowledges partial support
from INFN, Sezione di Firenze, from the Russian Foundation for Basic Research
(grant 10-01-00600), and from the Russian Academy of Sciences
program ``Mathematical Methods in Nonlinear Dynamics''.

\appendix

\section{Discriminant of a polynomial}

Let  $P(t)$ be a polynomial of degree $m$ ($m\geq 2$) with
non-vanishing leading coefficient, $p_m\ne 0$, and let $r_i$ ($i=1,\dots,m$)
be the roots of this polynomial,
\begin{equation}
P(t)= \sum_{k=0}^{m} p_k\, t^{k}=p_m \prod_{i=1}^m (t-r_i).
\end{equation}
The discriminant of polynomial $P(t)$, denoted as $D_m(P)$, where
the subscript refers to the degree of the polynomial, is defined as
(sometimes another  definition is used,
differing by an overall factor $(-1)^{m(m-1)/2}$  cf.~\cite{CLS-07,GKZ-94}):
\begin{equation}\label{DPdef}
D_m(P):= p_m^{2m-2}\prod_{i<k} (r_i-r_k)^2.
\end{equation}
It is clear that the condition that $P(t)$ has a double root is equivalent to:
\begin{equation}\label{DP0}
D_m(P)=0.
\end{equation}

We now  explain  how an  explicit expression can be obtained for $D_m(P)$,
from the knowledge of the coefficients (rather than of the roots) of polynomial $P(t)$.
For this aim it is useful to first
rephrase equation \eqref{DP0}  as the condition that two polynomials, say $P(t)$
and $P'(t)$ (see below for a slightly different choice) have a common root.

It is well known (see, e.g., \cite{CLS-07,GKZ-94}) that two arbitrary
polynomials have a common (in general, complex) root if and only if the
determinant of the Sylvester matrix, i.e. the resultant, of these
polynomials  vanishes.
To be more specific, let $A(t)$ and $B(t)$ be polynomials in $t$, of degrees
$m_A$ and $m_B$ respectively ($a_{m_A}\ne 0$ and $b_{m_B}\ne 0$),
\begin{equation}
A(t)= \sum_{k=0}^{m_A}a_k t^k,\qquad
B(t)=\sum_{k=0}^{m_B}b_k t^k.
\end{equation}
The Sylvester matrix associated with the polynomials $A(t)$ and $B(t)$ is an
$(m_A+m_B)$-by-$(m_A+m_B)$ matrix, denoted $S_{m_A,m_B}(A,B)$, where
the subscripts refers to the degrees of $A(t)$ and $B(t)$, which is given
by the formula:
\begin{equation}\label{SAB}
S_{m_A,m_B}(A,B)= \begin{pmatrix}
a_0 & a_{1} & \cdots & a_{m_A{-}1} & a_{m_A} & 0  &\cdots & 0\\
0 & a_0 & a_1 & \cdots & a_{m_A{-}1}  &a_{m_A} & \cdots & 0\\
\vdots & \ddots & \ddots & \ddots & \ddots & \ddots &\ddots & \vdots\\
0 & \cdots & 0& a_0 & a_1 & \cdots & a_{m_A{-}1}  & a_{m_A}\\
b_0 & b_1 & \dots & b_{m_B{-}1} &b_{m_B} & 0 & \cdots & 0\\
0 & b_0 & b_1 & \dots & b_{m_B{-}1} & b_{m_B} & \cdots & 0\\
\vdots & \ddots & \ddots & \ddots & \ddots & \ddots & \ddots &\vdots \\
0 & \cdots & 0 & b_0 & b_1 & \dots & b_{m_B{-}1} & b_{m_B}\\
\end{pmatrix}.
\end{equation}
The requirement that polynomials $A(t)$ and $B(t)$ have a common root is
equivalent to the condition
\begin{equation}\label{detS}
\det S_{m_A,m_B}(A,B)=0.
\end{equation}
The complete proof of this statement can be found  e.g. in  \cite{CLS-07,GKZ-94}.
Here we shall just show that  condition \eqref{detS} is necessary.
Indeed, let $\vec \tau_m$ denote
an $m$-component  column vector
with  entries $(\vec \tau_m)_i=t^{i-1}$ ($i=1,\dots,m$) then
\begin{equation}
S_{m_A,m_B}(A,B)\,\vec \tau_{m_A+m_B}  =
\begin{pmatrix}
A(t)\,\vec \tau_{m_B} \\[6pt]
B(t)\, \vec \tau_{m_A}\end{pmatrix}
= 0.
\end{equation}
Thus, if $A(t)$ and $B(t)$ simultaneously  vanish for some value of $t$,
then the Sylvester matrix has at least one vanishing eigenvalue and hence formula
\eqref{detS} follows.

Given polynomial $P(t)$, the standard way of obtaining its discriminant (which is
sometimes used as a definition of $D_m(P)$, instead of \eqref{DPdef}) is:
\begin{equation}\label{DPP'}
D_m(P)=\frac{(-1)^{m(m-1)/2}}{p_m}\det S_{m,m-1}(P,P').
\end{equation}
Taking into account that
\begin{equation}
P'(t)=mp_m t^{m-1}+ (m-1)p_{m-1}t^{m-2}+\cdots+ 2 p_2 t +p_1,
\end{equation}
let us consider polynomial $\tilde P(t)=mP(t)-tP'(t)$, which reads:
\begin{equation}
\tilde P(t)=p_{m-1} t^{m-1}+ 2 p_{m-2}t^{m-2}+\cdots+ (m-1)p_1 t + m p_0.
\end{equation}
Noting that the system of equations $P(t)=0$, $P'(t)=0$ can be  replaced by
the system of equations $\tilde P(t)=0$, $P'(t)=0$, we can therefore obtain
equation \eqref{DP0} using the Sylvester determinant associated with the
polynomials $\tilde P(t)$ and $P'(t)$. Furthermore, this can also be extended
to the expression for $D_m(P)$ regardless  of the requirement of its
vanishing. Indeed, using the standard properties of determinants, formula
\eqref{DPP'} can be simplified to
\begin{equation}
D_m(P)=(-1)^{m(m-1)/2}\det S_{m-1,m-1}(\tilde P,P'),
\end{equation}
which is exactly formula  \eqref{DtPP'}.

\bibliographystyle{amsplain}
\bibliography{vanni_bib}

\end{document}